\newtheorem{theorem}{Theorem}
\newtheorem{definition}{Definition}
\newtheorem{ex}{Example}
\newtheorem{fact}{Fact}
\newtheorem{lemma}{Lemma}
\DeclarePairedDelimiter\ceil{\lceil}{\rceil}
\pacs{
03.65.Aa,
03.67.Bg,
03.65.Ud
}
\begin{document}
\title{Criteria for universality of quantum gates}

\author{Adam Sawicki$^{1}$ and Katarzyna Karnas$^1$}
\email[E-mail: ]{a.sawicki@cft.edu.pl, karnas@cft.edu.pl}
\affiliation{$^1$Center for Theoretical Physics PAS, Al. Lotnik\'ow 32/46, 02-668, Warsaw, Poland}

\begin{abstract}
We consider the problem of deciding if a set of quantum one-qudit gates $\mathcal{S}=\{U_1,\ldots,U_n\}$ is universal. We provide the compact form criteria leading to a simple algorithm that allows deciding universality of any given set of gates in a finite number of steps. Moreover, for a non-universal $\mathcal{S}$ our criteria indicate what type of gates can be added to $\mathcal{S}$ to turn it into a universal set.
\end{abstract}
%

\maketitle

Universal quantum gates play an important role in quantum computing and quantum optics \cite{Loyd,Dojcz,Reck}. The ability to effectively manufacture gates operating on many modes, using for example optical networks that couple modes of light \cite{exp1,exp2}, is a natural motivation to consider the universality problems not only for qubits but also for higher dimensional systems, i.e. qudits (see also \cite{oszman,oszman2} for fermionic linear optics and quantum metrology). For quantum computing with qudits, a universal set of gates consists of all one-qudit gates together with an additional two-qudit gate that does not map separable states onto separable states \cite{Brylinski} (see \cite{daniel, zeier1, zeier, Laura} for recent results in the context of universal Hamiltonians). The set of all one-qudit gates can be, however, generated using a finite number of gates \cite{kuranishi}. We say that one-qudit gates $\mathcal{S}=\{U_1,\ldots,U_n\}\subset SU(d)$ are universal if any gate from $SU(d)$ can be built, with an arbitrary precision, using gates from $\mathcal{S}$. It is known that almost all sets of qudit gates are universal, i.e. non-universal sets $\mathcal{S}$ of the given cardinality are of measure zero and can be characterised by vanishing of a finite number of polynomials in the gates entries and their conjugates \cite{kuranishi,field}. Surprisingly, however, these polynomials are not known and it is hard to find operationally simple criteria that decide one-qudit gates universality. Some special cases of optical $3$-mode gates have been recently studied in \cite{BA,Sawicki} and the approach providing an algorithm for deciding universality of a given set of quantum gates that can be implemented on a quantum automata has been proposed \cite{Derksen05} (see also \cite{babai1,babai2,babai3} for algorithms deciding if a finitely generated group is infinite). The main obstruction in the problems considered in \cite{BA,Sawicki}  is the lack of classification of finite subgroups of $SU(d)$ for $d>4$. Nevertheless, as we show in this paper one can still provide some reasonable conditions for universality of one-qudit gates  without this knowledge. 

The efficiency of universal sets is typically measured by the number of gates that are needed to approximate other gates with a given precision $\epsilon$. The Solovay-Kitaev theorem states that all universal sets are roughly the same efficient. More precisely, the number of gates needed to approximate any gate $U\in SU(d)$ is bounded by $O(\log^c(1/\epsilon))$ \cite{NC00}, where $c$ may depend only on $d$ and $c\geq 1$. Recently there has been a bit of flurry in the area of single qubit gates \cite{bocharov,kliuchnikov,klucz,selinger} showing that using some number theoretic results and conjectures one can construct universal sets with $c = 1$. The approach presented in these contributions has been unified in \cite{sarnak} where the author pointed out the connection of these new results with the seminal work about distributing points on the sphere $S^2$ \cite{lubotzky} that uses results concerning optimal spectral gap for the averaging operator. Moreover, the authors of \cite{harrow} showed that the existence of the spectral gap implies  $c=1$ for all symmetric universal sets of single qudit gates, where by symmetric we mean the set $\mathcal{S}=\{U_1,\ldots, U_n\}$ with $n=2k$ and $U_{k+i}=U_i^{-1}$ for $i\in\{1,\ldots,k\}$. Although there are still some problems to solve in this area it seems that any further progress would require development of more advanced methods in pure mathematics rather than in quantum information. These developments should include verification of the spectral gap conjecture. Currently it is known to be true under the additional assumption that gates have algebraic entries \cite{BG1,BG2}. 

In this paper we present an approach that allows to decide universality of $\mathcal{S}$ by checking the spectra of the gates and solving some linear equations whose coefficients are polynomial in the entries of the gates and their complex conjugates. Moreover, for non-universal $\mathcal{S}$, our method indicates what type of gates can be added to make $\mathcal{S}$ universal. The paper is organised as follows. We start from presenting basic facts concerning the adjoint representation of $SU(d)$. The adjoint representations assigns to every matrix $U\in SU(d)$ a matrix $\mathrm{Ad}_U\in SO(d^2-1)$. We give the explicit formula for $\mathrm{Ad}_U$. The necessary condition for universality (Lemmas \ref{thm1} and \ref{lema2}) is then formulated using matrices $\mathrm{Ad}_U$ and $\mathrm{Ad}_{U^{-1}}$, where $U\in\mathcal{S}$ and boils down to checking the dimension of the kernel of the matrix $M_\mathcal{S}$ given by (\ref{matrix}). Next, we assume that the necessary condition for universality is satisfied and provide sufficient conditions for $<\mathcal{S}>$ to be infinite and thus dense in $SU(d)$. More precisely, if $<S>$ contains at least one element whose Hilbert-Schmidt distance from $Z(SU(d))=\{\alpha I:\alpha^d=1\}$ is both nonzero and less than $1/\sqrt{2}$ then $<\mathcal{S}>$ is infinite. Combining this with basic results in number theory we arrive at our main results. In Theorem \ref{main} we state that $\mathcal{S}$ is universal if $\mathcal{S}$ contains at least one matrix whose spectrum does not belong to some finite list of {\it exceptional spectra}. We also provide the algorithm which allows deciding universality of any given set of gates $\mathcal{S}$ (also when $\mathcal{S}$ contains matrices with exceptional spectra) in a finite number of steps. We discuss the correctness of the algorithm and provide instructive examples for $\mathcal{S}\subset SU(2)$.

\section{The necessary condition for universality}
Let us begin with introducing the basic notation used in this paper and explaining the adjoint representation. The set of gates $\mathcal{S}=\{U_1,\ldots,U_n\}\subset SU(d)$ is called universal if the set  generated by elements of $\mathcal{S}$
\begin{gather}
<\mathcal{S}>:=\{U_{i_1}\cdot\ldots\cdot U_{i_m}:U_{i_j}\in \mathcal{S}, \, m\in \mathbb{N}\}
\end{gather}
is dense in $SU(d)$, i.e. the closure $\overline{<\mathcal{S}>}=SU(d)$. In fact $\overline{<\mathcal{S}>}$ is always a Lie group \cite{KS16}. If this this group is $SU(d)$ we will say that $\mathcal{S}$ generates $SU(d)$.  

Let us denote by  $\mathfrak{su}(d)$ the Lie algebra of $SU(d)$. Recall that  $X\in\mathfrak{su}(d)$ iff  $X$ is an antihermitian traceless matrix. Moreover, the Lie algebra $\mathfrak{su}(d)$ is a real vector space equipped with a nondegenerate positive inner product defined by $(X|Y)=-\frac{1}{2}\mathrm{tr}XY$. For $U\in SU(d)$ and $X\in\mathfrak{su}(d)$ we define 
\[
\mathrm{Ad}_UX:=UXU^{-1}.
\]
One easily checks that $\mathrm{Ad}_U$ is a linear operator acting on $\mathfrak{su}(d)$. It is also invertible as $(\mathrm{Ad}_U)^{-1}=\mathrm{Ad}_{U^{-1}}$ and preserves the inner product as 
\begin{gather}
(\mathrm{Ad}_UX|\mathrm{Ad}_UY)=-\frac{1}{2}\mathrm{tr}UXU^{-1}UYU^{-1}=\nonumber\\=-\frac{1}{2}\mathrm{tr}XY=(X|Y)
\end{gather}
 Therefore $\mathrm{Ad}_U$ is an orthogonal transformation acting on $d^2-1$ dimensional vector space $\mathfrak{su}(d)$.  Upon a choice of an orthonormal basis $\{X_i\}_{i=1}^{d^2-1}$ in $\mathfrak{su}(d)$, i.e. basis that statisfies $(X_i|X_j)=\delta_{ij}$ the transformation $\mathrm{Ad}_U$ can be expressed in this basis as a matrix belonging to $SO(d^2-1)$, i.e $\mathrm{Ad}_U^t\mathrm{Ad}_U=I$ and $\det\mathrm{Ad}_U=1$. The entries of this matrix, $(\mathrm{Ad}_U)_{ij}$, are real and  defined by the identity:
\begin{gather}
\mathrm{Ad}_UX_j=U^{-1}X_jU=\sum_{i=1}^d(\mathrm{Ad}_U)_{ij}X_{i},
\end{gather} 
thus they are give by 
\begin{gather}\label{ad}
(\mathrm{Ad}_U)_{ij}=-\frac{1}{2}\mathrm{tr}\left(X_iUX_jU^{-1}\right).
\end{gather} 

Note that we also have $\mathrm{Ad}_{U_1U_2}=\mathrm{Ad}_{U_1}\mathrm{Ad}_{U_2}$ and this way we obtain the homomorphism 
\begin{gather}
\mathrm{Ad:}\;SU(d)\to SO(d^2-1),
\end{gather}
that is known as the {\it adjoint representation}.

For a set of $d\times d$ real matrices $M$, let us denote the set of all $d\times d$  matrices commuting with matrices from $M$ by
\begin{gather}
\mathcal{C}(M)=\{L:[L,m]=0,\,\forall m\in M\}.
\end{gather}
The adjoint representation of $SU(d)$ is an absolutely irreducible real representation and therefore by the extended version of Schur's lemma \cite{Dieck,hall}, the only $(d^2-1)\times (d^2-1)$ matrix that commutes with all matrices $\mathrm{Ad}_{SU(d)}=\{\mathrm{Ad}_{U}:U\in SU(d)\}$ is proportional to the identity matrix, $I$. In other words $\mathcal{C}({\mathrm{Ad}_{SU(d)}})=\{\lambda I:\lambda \in\mathbb{R}\}$. 
 
 \begin{ex}
The adjoint representation for $d=2$, i.e. $\mathrm{Ad}:SU(2)\rightarrow SO(3)$ has a particularly nice form. Any matrix from $SU(2)$ can be written in a form
\begin{gather}
U(\phi,\vec{k})=I\cos\phi+\sin\phi(k_xX+k_yY+k_zZ),
\end{gather}
where $Y=i\sigma_1,\;X=i\sigma_2,\;Z=i\sigma_3$ and $\sigma_i$ are Pauli matrices, $\vec{k}=(k_x,k_y,k_z)^T$ satisfies $k_x^2+k_y^2+k_z^2=1$.
Similarly, any matrix from $SO(3)$ has a form
\begin{gather}
O(\phi,\vec{k}) = I+\sin\phi(-k_xX_{12}+k_yX_{13}-k_zX_{23})+\label{def:so3_element}\\+2\sin^2\frac{\phi}{2}(-k_xX_{12}+k_yX_{13}-k_zX_{23})^2,\nonumber
\end{gather}
where $X_{ij}=E_{ij}-E_{ji}$, and $E_{ij}$ is a matrix whose only non vanishing entry is $(i,j)$. One easily verifies that the adjoint representation is given by  
\begin{gather}
\mathrm{Ad}_{U(\phi,\vec{k})}= O(2\phi,\vec{k}).
\end{gather}
 \end{ex}
 For $U\in SU(d)$, where $d>2$ calculation of matrices $\mathrm{Ad}_U$ can be done using formula (\ref{ad}) upon the choice of orthonormal basis in $\mathfrak{su}(d)$. For $d=3$ this basis is given by, for example, the Gell-Mann matrices multiplied by imaginary unit $i$. For higher $d$ one can construct an orthonormal basis of $\mathfrak{su}(d)$ in an analogous way as for $d=3$. 

General considerations that can be found in \cite{KS16} show that the group $\overline{<S>}$ can be either:
\begin{enumerate}
\item $\overline{<S>}=SU(d)$, or
\item $\overline{<S>}$ is infinite and connected, or
\item $\overline{<S>}$ is infinite and consists of $k<\infty$ connected components, where each component has the same dimension (as a manifold), or
\item $\overline{<S>}$ is finite.
\end{enumerate}
Note that in cases 1, 2, and 3 the group $\overline{<S>}$ has infinite number of elements. Thus we first provide criteria that distinguish between case 1 and  cases 2 and 3. To this end we will use the adjoint representation. For $\mathcal{S}=\{U_1,\ldots, U_n\}\subset SU(d)$ let $\mathrm{Ad}_{\mathcal{S}}=\{\mathrm{Ad}_{U}:U\in\mathcal{S}\}$. Note that if $[L,\mathrm{Ad}_{U_1}]=0$ and $[L,\mathrm{Ad}_{U_2}]=0$  then 
\begin{gather*}
[L,\mathrm{Ad}_{U_1U_2}]=[L,\mathrm{Ad}_{U_1}]\mathrm{Ad}_{U_2}+\mathrm{Ad}_{U_1}[L,\mathrm{Ad}_{U_2}]=0.
\end{gather*}
Thus if $\mathcal{S}$ generates $SU(d)$ and $L$ is a matrix that commutes with $\mathrm{Ad}_\mathcal{S}$ then $L$ commutes with $\mathrm{Ad}_{SU(d)}$. Therefore for universal $\mathcal{S}$ we have $\mathcal{C}(\mathrm{Ad}_{\mathcal{S}})=C({\mathrm{Ad}_{SU(d)}})=\{\lambda I:\lambda \in \mathbb{R}\}$. It turns out (see \cite{KS16}) that the converse is true under one additional assumption, namely that $<\mathcal{S}>$ is infinite.
\begin{lemma}\label{thm1}
For a set of special unitary matrices $\mathcal{S}=\{U_1,\ldots,U_n\}$ assume that ${<\mathcal{S}>}$ is infinite and $\mathcal{C}(\mathrm{Ad}_{\mathcal{S}})=\{\lambda I:\lambda \in \mathbb{R}\}$. Then $\overline{<\mathcal{S}>}=SU(d)$.
\end{lemma}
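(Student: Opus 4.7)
The plan is to realize $G := \overline{<\mathcal{S}>}$ as a closed Lie subgroup of $SU(d)$ and show directly that its Lie algebra $\mathfrak{g} \subseteq \mathfrak{su}(d)$ must be the whole of $\mathfrak{su}(d)$. By Cartan's closed subgroup theorem, $G$ is a compact Lie group; let $\mathfrak{g}$ be its Lie algebra. A compact Lie group whose identity component is trivial is finite, so the assumption that $<\mathcal{S}>$ is infinite forces $\dim G \geq 1$, and in particular $\mathfrak{g} \neq \{0\}$. This is the only place where the ``infinite'' hypothesis enters, and it is the subtle step to get right — without it, $G$ could be a nontrivial finite subgroup of $SU(d)$ with $\mathcal{C}(\mathrm{Ad}_\mathcal{S})$ still trivial (e.g. a finite group whose adjoint image acts irreducibly), and no Lie-algebra argument would be available.

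Next I would exploit invariance. For any $U \in G$ the conjugation $V \mapsto UVU^{-1}$ is an inner automorphism of $G$, so its differential $\mathrm{Ad}_U$ preserves $\mathfrak{g}$. Since $\mathcal{S} \subseteq G$, this gives $\mathrm{Ad}_U(\mathfrak{g}) = \mathfrak{g}$ for every $U \in \mathcal{S}$. Because $\mathrm{Ad}_U$ is an orthogonal transformation of $\mathfrak{su}(d)$ with respect to $(X|Y) = -\tfrac{1}{2}\mathrm{tr}(XY)$, the orthogonal complement $\mathfrak{g}^\perp$ is also $\mathrm{Ad}_U$-invariant. Thus the orthogonal projector $P : \mathfrak{su}(d) \to \mathfrak{g}$ commutes with every $\mathrm{Ad}_U$ with $U \in \mathcal{S}$; that is, $P \in \mathcal{C}(\mathrm{Ad}_\mathcal{S})$.

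Now I would invoke the hypothesis: $P$ must be a scalar multiple of the identity. Since $P$ is an orthogonal projector, the only possibilities are $P = 0$ and $P = I$; the former is ruled out because $\mathfrak{g} \neq \{0\}$, so $P = I$ and $\mathfrak{g} = \mathfrak{su}(d)$. The unique connected Lie subgroup of $SU(d)$ with this Lie algebra is $SU(d)$ itself (as $SU(d)$ is connected), so the identity component $G_0$ already fills $SU(d)$. From $SU(d) = G_0 \subseteq G \subseteq SU(d)$ we conclude $\overline{<\mathcal{S}>} = G = SU(d)$.

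The argument is essentially a Schur-style observation — invariant subspaces for a family of orthogonal operators produce commuting projectors — combined with the input that this family has trivial commutant. The only genuine work is the topological/Lie-theoretic passage from ``infinite discrete object'' to ``positive-dimensional Lie algebra,'' which is why the lemma fails without the infiniteness assumption (cases 2, 3 of the case split versus case 4).
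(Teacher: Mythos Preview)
Your argument is correct. The paper does not actually prove this lemma in the text; it only states that the proof ``is based on the structure theory for semisimple Lie groups and can be found in [KS16].'' Your route --- observe that the Lie algebra $\mathfrak{g}$ of $G=\overline{\langle\mathcal{S}\rangle}$ is a nonzero $\mathrm{Ad}_{\mathcal{S}}$-invariant subspace of $\mathfrak{su}(d)$, so the orthogonal projector onto $\mathfrak{g}$ lies in $\mathcal{C}(\mathrm{Ad}_{\mathcal{S}})$ and must therefore equal $I$ --- is almost certainly the core of the referenced argument, and if anything is more elementary than the phrase ``structure theory for semisimple Lie groups'' might suggest: you use no classification, only the projector/Schur trick together with the compactness fact that a zero-dimensional closed subgroup is finite.

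One small point worth making explicit before invoking Cartan's theorem: $\overline{\langle\mathcal{S}\rangle}$ is a priori only the closure of a \emph{semigroup}, since $\langle\mathcal{S}\rangle$ as defined in the paper need not contain inverses. That the closure is nonetheless a subgroup follows from compactness of $SU(d)$ (the closure of a sub-semigroup of a compact group is a group), and the paper notes this just before the lemma. With that granted, every step you wrote goes through.
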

The proof of this lemma is based on the structure theory for semisimple Lie groups and can be found in \cite{KS16}. Here we only make some additional remarks regarding calculation of $\mathcal{C}(\mathrm{Ad}_{\mathcal{S}})$. Let $\mathrm{vec}(L)$ be the vectorisation of matrix $L$, i.e. the vector obtained by stacking the columns of the matrix $L$ on top of one another. One easily calculates that 
\begin{gather*}
[L,\mathrm{Ad}_U]=0\Leftrightarrow \left(I\otimes \mathrm{Ad}_U-\mathrm{Ad}_{U^\dagger}\otimes I\right)\mathrm{vec}(L)=0,
\end{gather*}
where $U^\dagger$ is the complex conjugate and transpose of $U$, i.e. $U^\dagger=\bar{U}^t$. Let 
\begin{gather}\label{matrix}
M_{\mathcal{S}}=\left(\begin{array}{c}
I\otimes \mathrm{Ad}_{U_1}-\mathrm{Ad}_{U_1^\dagger}\otimes I \\
\vdots\\
I\otimes \mathrm{Ad}_{U_n}-\mathrm{Ad}_{U_n^\dagger}\otimes I \\
\end{array}\right)
\end{gather} 
\begin{lemma}\label{lema2}
$\mathcal{C}(\mathrm{Ad}_{\mathcal{S}})=\{\lambda I:\lambda \in \mathbb{R}\}$ if and only if the kernel of $M_{\mathcal{S}}$ is one-dimensional.
\end{lemma}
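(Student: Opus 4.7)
The plan is to use the vectorization identity stated just above the matrix $M_\mathcal{S}$ to convert the commutant computation into a kernel computation, and then simply match dimensions. First I would recall that $L\mapsto\mathrm{vec}(L)$ is a linear bijection from the real vector space of $(d^2-1)\times(d^2-1)$ matrices onto $\mathbb{R}^{(d^2-1)^2}$. By the identity from the excerpt, the condition $[L,\mathrm{Ad}_{U_i}]=0$ is equivalent to $\mathrm{vec}(L)$ being annihilated by the block $I\otimes\mathrm{Ad}_{U_i}-\mathrm{Ad}_{U_i^\dagger}\otimes I$. Requiring this for every $i\in\{1,\ldots,n\}$ is, by construction of $M_\mathcal{S}$, the same as asking $\mathrm{vec}(L)\in\ker M_\mathcal{S}$. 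Hence $L\in\mathcal{C}(\mathrm{Ad}_\mathcal{S})$ if and only if $\mathrm{vec}(L)\in\ker M_\mathcal{S}$.

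Next I would observe that vectorization therefore restricts to a linear isomorphism $\mathcal{C}(\mathrm{Ad}_\mathcal{S})\xrightarrow{\sim}\ker M_\mathcal{S}$ of real vector spaces, and in particular preserves dimension. Since the identity matrix commutes with everything, $\mathrm{vec}(I)$ always lies in $\ker M_\mathcal{S}$, so the kernel is at least one-dimensional, and the line $\mathbb{R}\cdot I$ is always contained in $\mathcal{C}(\mathrm{Ad}_\mathcal{S})$. From this the equivalence is immediate: if $\mathcal{C}(\mathrm{Ad}_\mathcal{S})=\{\lambda I:\lambda\in\mathbb{R}\}$, then this space has real dimension one, forcing $\dim\ker M_\mathcal{S}=1$; conversely, if $\dim\ker M_\mathcal{S}=1$, then $\ker M_\mathcal{S}=\mathbb{R}\cdot\mathrm{vec}(I)$, and pulling back through vectorization gives $\mathcal{C}(\mathrm{Ad}_\mathcal{S})=\mathbb{R}\cdot I$.

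The main obstacle, such as it is, is purely bookkeeping: one must verify that $\mathrm{Ad}_U^T=\mathrm{Ad}_{U^\dagger}$ so that the tensor block $A^T\otimes I-I\otimes A$ naturally produced by applying $\mathrm{vec}$ to $LA-AL$ coincides with the block written in the definition of $M_\mathcal{S}$. This uses the already-noted fact $\mathrm{Ad}_U\in SO(d^2-1)$ together with $\mathrm{Ad}_{U^{-1}}=\mathrm{Ad}_U^{-1}$ and $U^{-1}=U^\dagger$ for $U\in SU(d)$. Once this identification is in place, no substantive difficulty remains and the lemma reduces to the rank-nullity principle applied to the linear map $M_\mathcal{S}$.
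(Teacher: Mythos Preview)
Your argument is correct and is exactly the intended one: the paper does not give a separate proof of Lemma~\ref{lema2} but presents it as an immediate consequence of the vectorization identity $[L,\mathrm{Ad}_U]=0\Leftrightarrow(I\otimes\mathrm{Ad}_U-\mathrm{Ad}_{U^\dagger}\otimes I)\,\mathrm{vec}(L)=0$ stated just before the definition of $M_\mathcal{S}$. Your bookkeeping remark about $\mathrm{Ad}_U^T=\mathrm{Ad}_{U^\dagger}$ is the only detail the paper leaves implicit, and you handle it correctly.
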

We emphasise  the role of the adjoint representation which is crucial in Lemma \ref{thm1}. In particular there are infinite subgroups $\overline{<\mathcal{S}>}$ such that $\mathcal{C}(\mathcal{S})=\mathcal{C}({SU(d)})$ but $\overline{<\mathcal{S}>} \neq SU(d)$. In Example 1 we provide such a subgroup for $d=2$. We next characterise space $\mathcal{C}(\mathrm{Ad}_\mathcal{S})$ for $SU(2)$.

Let us recall that the composition of two unitary matrices $U(\gamma,\vec{k}_{12})=U(\phi_1,\vec{k}_1)U(\phi_2,\vec{k}_2)$ is a unitary matrix with $\gamma$ and $\vec{k}_{12}$ determined by:
\begin{small}
\begin{gather}
\cos\gamma = \cos\phi_1\cos\phi_2-\sin\phi_1\sin\phi_2\vec{k}_1\cdot\vec{k}_2,\label{def:gamma}\\\label{def:k12}
\vec{k}_{12}=\frac{1}{\sin\gamma}(\vec{k}_1\sin\phi_1\cos\phi_2+\vec{k}_2\sin\phi_2\cos\phi_1+\\\nonumber
+\vec{k}_1\times\vec{k}_2\sin\phi_1\sin\phi_2).
\end{gather}
\end{small}
Moreover, two unitary matrices $U_1(\phi_1,\vec{k}_1)$, $U_2(\phi_2,\vec{k}_2)$ commute iff $\vec{k}_1\parallel\vec{k}_2$ or $\phi=k\pi$. Similarly two orthogonal matrices $O_1(\phi_1,\vec{k}_1)$, $O_2(\phi_2,\vec{k}_2))$ commute if  $\vec{k}_1\parallel\vec{k}_2$ or  one of $\phi_i$'s is an even multiple of $\pi$, or $\phi_1=\pm\pi=\phi_2$ and $\vec{k}_1\perp\vec{k}_2$. Making use of these facts in \cite{KS16} we show:
\begin{fact}\label{fact1}
For noncommuting $U_1(\phi_1,\vec{k}_1)$, $U_2(\phi_2,\vec{k}_2)$, the space $\mathcal{C}(\mathrm{Ad}_{U_1(\phi_1,\vec{k}_1)},\mathrm{Ad}_{U_2(\phi_2,\vec{k}_2)})$ is larger than $\{\lambda I:\lambda\in\mathbb{R}\}$ if and only if: (1) $\phi_1=\frac{k\pi}{2}=\phi_2$, (2) one of $\phi_i$'s is equal to $\frac{k\pi}{2}$ and $\vec{k}_1\perp\vec{k}_2$, where $k$ is an odd integer.
\end{fact}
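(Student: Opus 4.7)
The plan is to identify all real $3\times 3$ matrices $L$ commuting with both $O_i:=\mathrm{Ad}_{U_i}=O(2\phi_i,\vec k_i)$ by splitting $L$ into its symmetric and antisymmetric parts. Because each $O_i$ is orthogonal, transposing $LO_i=O_iL$ shows that $L^{T}$ also commutes with $O_i$; hence $(L+L^{T})/2$ and $(L-L^{T})/2$ commute with $O_1,O_2$ separately, and the commutant decomposes as $\mathcal{C}=\mathcal{C}_{\mathrm{sym}}\oplus\mathcal{C}_{\mathrm{anti}}$.

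For the antisymmetric piece I would use the identification of $\mathfrak{so}(3)$ with $\mathbb{R}^{3}$ by the cross product: if $L_{\vec\ell}\vec v=\vec\ell\times\vec v$, then $OL_{\vec\ell}O^{-1}=L_{O\vec\ell}$, so $L_{\vec\ell}$ commutes with $O_i$ iff $\vec\ell$ points along the axis $\vec k_i$ of $O_i$. The noncommutativity of $U_1,U_2$ excludes $U_i=\pm I$ and forces $\vec k_1\not\parallel\vec k_2$; hence $\vec\ell=\vec 0$ and $\mathcal{C}_{\mathrm{anti}}=\{0\}$.

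For the symmetric piece I would use the elementary fact that a symmetric $S$ commutes with a rotation $R_\theta(\vec k)$ iff it preserves every real $R_\theta(\vec k)$-invariant subspace. When $\theta\neq 0,\pi$ these are only $\mathrm{span}(\vec k)$ and $\vec k^{\perp}$, forcing $S=a\,\vec k\vec k^{T}+b(I-\vec k\vec k^{T})$; when $\theta=\pi$, $R_\pi(\vec k)=2\vec k\vec k^{T}-I$ is an involution, so the condition is weaker — it reduces to $\vec k$ being an eigenvector of $S$, with $S|_{\vec k^{\perp}}$ otherwise free. Since $O_i$ rotates by $2\phi_i$, the borderline case $\theta=\pi$ occurs precisely when $\phi_i=k\pi/2$ with $k$ odd, i.e.\ the ``exceptional'' angles of the statement.

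With these two building blocks, the proof reduces to three sub-cases. If neither $\phi_i$ is exceptional, imposing both cylindrical forms on $S$ under $\vec k_1\not\parallel\vec k_2$ yields $a_i=b_i$, so $S$ is scalar and $\mathcal{C}=\mathbb{R}I$. If exactly one $\phi_i$, say $\phi_1$, is exceptional, then $S$ is cylindrical around $\vec k_2$ and has $\vec k_1$ as an eigenvector; substituting gives the scalar equation $(a-b)(\vec k_1\cdot\vec k_2)=0$, which forces either $S$ scalar or $\vec k_1\perp\vec k_2$ — the latter being case (2). If both $\phi_i$ are exceptional, one needs only $\vec k_1,\vec k_2$ to be eigenvectors of $S$, which always admits nonscalar solutions, e.g.\ $S=aI+(b-a)\vec n\vec n^{T}$ with $\vec n=\vec k_1\times\vec k_2/|\vec k_1\times\vec k_2|$, yielding case (1). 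I expect the only bookkeeping subtlety to be the mild gap between ``noncommuting $U_i$'' (the hypothesis) and ``noncommuting $O_i$'': the anti-commuting Pauli-like pair $\phi_1=\phi_2=\pi/2$, $\vec k_1\perp\vec k_2$ has commuting $O_i$ but is already contained in $(1)\cap(2)$ and is consistent with the conclusion.
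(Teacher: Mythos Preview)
Your argument is correct. The symmetric/antisymmetric split is clean, the identification $\mathfrak{so}(3)\cong\mathbb{R}^3$ disposes of the antisymmetric part under $\vec k_1\nparallel\vec k_2$, and the invariant-subspace analysis of symmetric $S$ commuting with $R_\theta(\vec k)$ gives exactly the right dichotomy between generic angles and the half-turn $\theta=\pi$ (i.e.\ $\phi_i=\tfrac{k\pi}{2}$, $k$ odd). The three sub-cases then reproduce (1) and (2), and your remark about the Pauli-like degenerate case is accurate.

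The paper does not actually prove Fact~\ref{fact1} here; it states the $SU(2)$ and $SO(3)$ commutation rules and then defers the derivation to \cite{KS16} (``Making use of these facts in \cite{KS16} we show\ldots''). The indicated route is therefore to look for a non-identity rotation $O(\psi,\vec m)\in SO(3)$ commuting with both $O(2\phi_i,\vec k_i)$, using directly the listed commutation conditions for pairs of rotations. Your approach is genuinely different and somewhat more direct: rather than restricting attention to orthogonal elements of the commutant and relying on the $SO(3)$ pairwise rules, you compute the full linear commutant in $M_3(\mathbb{R})$ via the symmetric/antisymmetric decomposition. This buys you the entire space $\mathcal{C}(\mathrm{Ad}_{U_1},\mathrm{Ad}_{U_2})$ explicitly (e.g.\ the cylindrical family $aI+(b-a)\vec n\vec n^T$), not just a witnessing rotation, and it makes transparent why the borderline occurs precisely at half-turns. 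The paper's route is shorter once one has the $SO(3)$ commutation list in hand, but tacitly uses that a nontrivial real commutant of rotations always contains a nontrivial rotation---a step your argument does not need.
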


\section{When is $<S>$ infinite?}

We next describe the conditions under which $\overline{<\mathcal{S}>}$ is infinite. For $U_1,U_2\in SU(d)$ the group commutator is defined as $[U_1,U_2]_\bullet = U_1U_2U_1^{-1}U_2^{-1}$. Note that $[U_1,U_2]=0$ is equivalent to $[U_1,U_2]_\bullet =I$. The distance between elements of $SU(d)$ can be measured using the Hilbert-Schmidt norm  $||U|| = \sqrt{\mathrm{tr}UU^\dagger}$. For two elements $U_1,U_2$ we have the following relation between their distances from the identity and the distance of their group commutator from the identity \cite{curtis}: 
\begin{gather}
||[U_1,U_2]_\bullet- I|| \leq \sqrt{2}||U_1- I||\cdot ||U_2- I||,\label{def:inequality}\\
[U_1,[U_1,U_2]_\bullet]_\bullet=I\:\mathrm{and}\:||U_2-I||<2\Rightarrow [U_1,U_2]_\bullet=I.\nonumber
\end{gather} 
Let 
\begin{gather}
B_{\alpha}=\{U\in SU(d):\|U-\alpha I\|\leq \frac{1}{\sqrt{2}}\}\subset SU(d),
\end{gather} 
be a ball of radius $\frac{1}{\sqrt{2}}$ that is centred at elements $\alpha I$. As $\det\alpha I=1$ we need to assume $\alpha^d=1$. Let 
\begin{gather}
\mathcal{B}=\bigcup_{\alpha^d=1} B_\alpha.
\end{gather}
It turns out that noncommuting elements belonging to $\mathcal{B}$ generate infinite subgroups of $SU(d)$:
\begin{lemma}
Assume that $[U_1,U_2]_\bullet\notin Z(SU(d))$ and $U_1,U_2\in \mathcal{B}$. Then $<U_1,U_2>$ is infinite. 
\label{lem:balls}
\end{lemma}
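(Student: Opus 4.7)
The plan is first to reduce, via the $Z(SU(d))$-invariance of the group commutator, to the case $\|U_i-I\|\leq 1/\sqrt{2}$, and then to contradict finiteness of $\langle U_1,U_2\rangle$ by iterating the Curtis-type estimates in (\ref{def:inequality}). Pick $\alpha_i$ with $\alpha_i^d=1$ and $\|U_i-\alpha_i I\|\leq 1/\sqrt{2}$ and set $V_i:=\alpha_i^{-1}U_i$. Then $\|V_i-I\|=\|U_i-\alpha_i I\|\leq 1/\sqrt{2}$, the scalars cancel in the group commutator so $[V_1,V_2]_\bullet=[U_1,U_2]_\bullet\notin Z(SU(d))$, and the generated groups $\langle V_1,V_2\rangle$ and $\langle U_1,U_2\rangle$ differ only by multiplication by the finite subgroup $Z(SU(d))$, hence one is infinite iff the other is. I therefore work with the $V_i$ and suppose for contradiction that $G:=\langle V_1,V_2\rangle$ is finite.

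Define $c_0:=V_2$ and $c_{n+1}:=[V_1,c_n]_\bullet\in G$; by induction with the first inequality in (\ref{def:inequality}), $\|c_n-I\|\leq 1/\sqrt{2}$ for all $n$. The contradiction will follow from two parallel facts. \emph{Nonvanishing:} $c_n\neq I$ for every $n\geq 1$. If not, let $m$ be the smallest such index. For $m=1$ this gives $c_1=[V_1,V_2]_\bullet=I\in Z(SU(d))$, contradicting the hypothesis; for $m\geq 2$, the relation $[V_1,[V_1,c_{m-2}]_\bullet]_\bullet=c_m=I$ together with $\|c_{m-2}-I\|\leq 1/\sqrt{2}<2$ forces, via the second implication in (\ref{def:inequality}), that $c_{m-1}=[V_1,c_{m-2}]_\bullet=I$, contradicting minimality of $m$. \emph{Contraction:} the first inequality gives $\|c_{n+1}-I\|\leq\sqrt{2}\|V_1-I\|\cdot\|c_n-I\|$, so whenever $\|V_1-I\|<1/\sqrt{2}$ iteration yields $\|c_n-I\|\to 0$. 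Since $G$ is finite, $\{\|c_n-I\|:n\geq 0\}$ takes only finitely many values, and by nonvanishing is bounded below away from zero; this contradicts the convergence.

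The main technical obstacle is the boundary case $\|V_1-I\|=1/\sqrt{2}$, where the contraction factor $\sqrt{2}\|V_1-I\|$ equals one and the iteration above is only non-increasing. I would resolve it by swapping the roles of $V_1$ and $V_2$ whenever only one of them saturates the bound, and otherwise restarting from $c_1$ with the recurrence $d_0:=V_2$, $d_{n+1}:=[c_1,d_n]_\bullet$, which contracts with factor $\sqrt{2}\|c_1-I\|<1$ unless (\ref{def:inequality}) is already saturated at the very first step. The remaining subcase---simultaneous saturation of the Curtis bound with $V_1,V_2$ both on the boundary of $B_I$---requires a dedicated analysis of the equality case in (\ref{def:inequality}), which one expects to force algebraic degeneracies of $V_1,V_2$ (such as $V_1-I$ and $V_2-I$ being collinear) that are incompatible with $[V_1,V_2]_\bullet\notin Z(SU(d))$.
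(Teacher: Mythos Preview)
Your argument is essentially the same as the paper's: reduce to the ball $B_I$ by multiplying with suitable central elements, then build an iterated group-commutator sequence, use the first inequality in (\ref{def:inequality}) for contraction towards $I$, and the second implication in (\ref{def:inequality}) for the ``nonvanishing'' $c_n\neq I$; a finite $G$ then gives the contradiction. The only cosmetic difference is that the paper fixes the second argument of the iterated commutator ($g_0=U_1$, $g_k=[g_{k-1},U_2]_\bullet$) while you fix the first ($c_0=V_2$, $c_{n+1}=[V_1,c_n]_\bullet$); the nonvanishing step works in either form after the obvious rewriting $[A,B]_\bullet=([B,A]_\bullet)^{-1}$.

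The boundary case $\|V_i-I\|=1/\sqrt{2}$ is the only genuine gap, and you correctly flag it. The paper's sketch does not handle it here either, deferring to \cite{KS16} for the full argument. Your proposed fix, however, is not yet a proof: for the auxiliary recursion $d_0=V_2$, $d_{n+1}=[c_1,d_n]_\bullet$ you would need $d_1=[c_1,V_2]_\bullet\neq I$ to seed the nonvanishing induction, and nothing in your hypotheses rules out that $c_1=[V_1,V_2]_\bullet$ commutes with $V_2$; the ``equality case of (\ref{def:inequality})'' is only asserted to force degeneracies, not shown to. So as written, your argument proves the lemma for $U_1,U_2$ in the \emph{open} balls $\|U-\alpha I\|<1/\sqrt{2}$, but the closed-ball statement as claimed still requires the boundary analysis to be carried out (or an appeal to \cite{KS16}).
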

One of the steps in the proof of lemma \ref{lem:balls} uses relations (\ref{def:inequality}) to show that the sequence $g_0=U_1$, $g_1=[U_1,U_2]_\bullet$, $g_k=[g_{k-1},U_2]_\bullet$ converges to $I$ and $g_n\neq I$ for any integer $n$ \cite{field} (see \cite{KS16} for the full discussion).

We next describe when $U\in B_{\alpha_m }$, where $\alpha_m=e^{i \theta_m}$ and $\theta_m=\frac{2m}{d}\pi$. To this end note that 
\begin{gather}\label{trdist}
\|U-\alpha_m I\|^2=2\mathrm{tr}I-\alpha_m \mathrm{tr}U^\dagger-\alpha_m^\ast \mathrm{tr}U.
\end{gather}
 As the trace of $U$ is determined by its spectrum, the desired condition can be expressed in terms  of the eigenvalues of $U$ that are given by $\{e^{i\phi_1},...,e^{i\phi_d}\}$, $\phi_i\in [0,2\pi[$  and $\sum_{i=1}^{d}\phi_{i}=0\,\mathrm{mod}\,2\pi$. Direct calculations lead to:
\begin{gather}\label{condB1}
U\in B_{\alpha_m} \Leftrightarrow\sum_{i=1}^{d}\sin^{2}\frac{\phi_{i}-\theta_m}{2}<\frac{1}{8}.
\end{gather}

 Let us next assume that $U\in SU(d)$ does not belong to $\mathcal{B}$. Then one can show that there always exists an integer $n$ such that $U^{n}$ belongs to some $B_{\alpha}$ \cite{KS16}, $\alpha I\in Z(SU(d))$. For a given $U$, let $n_U$ be the smallest integer satisfying this condition. In \cite{KS16} we prove the modified version of the Dirichlet's approximation theorem and use it to find an upper bound for $N_{SU(d)}:=\mathrm{max}_{U}n_U$. This way, for every $U\in SU(d)$ there is $1\leq n\leq N_{SU(d)}$ such that $U^{n}\in B_{\alpha }$ for some $\alpha I \in Z(SU(d))$. Thus by taking powers $1\leq n\leq N_{SU(d)}$ we can move every element of $SU(d)$ into $\mathcal{B}$. 
 Assume next that the necessary condition for universality is satisfied, i.e. $\mathcal{C}(\mathrm{Ad}_{\mathcal{S}})=\{\lambda I: \lambda \in \mathbb{R}\}$. From Lemma \ref{thm1} one can easily deduce that under the assumption that $<\mathcal{S}>$ is infinite  the intersection $<\mathcal{S}>\cap \mathcal{B}$ is dense in $\mathcal{B}$. As we have shown in \cite{KS16} the necessary condition for universality places significant constrains on the structure of $<\mathcal{S}>$ also in the case when $<\mathcal{S}>$ is a finite group, namely we have that $<\mathcal{S}>\cap \mathcal{B}$ is a subset of $Z(SU(d))$. Thus $<\mathcal{S}>$ is finite if and only if there are no elements in $<\mathcal{S}>$ that belong to $\mathcal{B}$ other than those in $Z(SU(d))$. The above discussion is summarised by:
\begin{lemma}\label{term}
Let $\mathcal{S}=\{U_1,\ldots,U_n\}$ and assume that $\mathcal{C}(\mathrm{Ad}_{\mathcal{S}})=\{\lambda I:\lambda \in\mathbb{R}\}$. Then $\overline{<\mathcal{S}>}=SU(d)$ if and only if there is at least one matrix $U\in<\mathcal{S}>$ that belongs to $\mathcal{B}\setminus Z(SU(d))$.
\end{lemma}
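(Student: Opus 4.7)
The plan is to prove the two implications separately. The ``only if'' direction will be a quick topological remark, while the ``if'' direction reduces, via Lemma \ref{thm1}, to showing that $<\mathcal{S}>$ is infinite, which is where the structural information from \cite{KS16} recalled in the paragraph preceding the statement enters.

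For the ``only if'' direction, suppose $\overline{<\mathcal{S}>}=SU(d)$, so that $<\mathcal{S}>$ is dense. Each ball $B_\alpha$ has nonempty interior $\{U:\|U-\alpha I\|<1/\sqrt{2}\}$ inside $SU(d)$, and $Z(SU(d))$ is a finite set of $d$ points. Hence $\mathcal{B}\setminus Z(SU(d))$ contains a nonempty open subset of $SU(d)$, and density of $<\mathcal{S}>$ forces it to meet this open set, producing the required element.

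For the ``if'' direction, assume $U\in<\mathcal{S}>\cap(\mathcal{B}\setminus Z(SU(d)))$. Since the necessary condition $\mathcal{C}(\mathrm{Ad}_{\mathcal{S}})=\{\lambda I:\lambda\in\mathbb{R}\}$ is in force, by Lemma \ref{thm1} it is enough to show $<\mathcal{S}>$ is infinite, whence $\overline{<\mathcal{S}>}=SU(d)$. I would invoke the structural fact from \cite{KS16} stated in the discussion above the lemma: under the necessary condition, if $<\mathcal{S}>$ were finite then $<\mathcal{S}>\cap\mathcal{B}\subseteq Z(SU(d))$. The contrapositive, applied to the noncentral element $U\in<\mathcal{S}>\cap\mathcal{B}$, forces $<\mathcal{S}>$ to be infinite, and the conclusion follows.

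The main obstacle is of course this structural claim about finite groups, which is the content of \cite{KS16}. If one wanted a self-contained route, the natural attempt would be as follows: pick a second generator $V\in\mathcal{S}$, replace it by a power $V^n$ with $1\le n\le N_{SU(d)}$ so that $V^n\in\mathcal{B}$, and try to apply Lemma \ref{lem:balls} to the pair $(U,V^n)$. If the group commutator $[U,V^n]_\bullet$ is not in $Z(SU(d))$ one wins immediately via Lemma \ref{lem:balls}; the delicate case is when all such commutators are central, which must be ruled out using $\mathcal{C}(\mathrm{Ad}_{\mathcal{S}})=\{\lambda I\}$, and this case analysis is precisely what is carried out in \cite{KS16}. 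I would therefore defer to that reference rather than reproduce the dichotomy here.
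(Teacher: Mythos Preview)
Your proposal is correct and follows the same argument as the paper: the ``only if'' direction is the density observation that $<\mathcal{S}>$ must meet the open set $\mathcal{B}\setminus Z(SU(d))$, and the ``if'' direction is the contrapositive of the structural fact from \cite{KS16} (finite $<\mathcal{S}>$ satisfying the necessary condition has $<\mathcal{S}>\cap\mathcal{B}\subseteq Z(SU(d))$) combined with Lemma~\ref{thm1}. The paper presents exactly this reasoning in the paragraph immediately preceding the lemma, which it then states as a summary.
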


We know that every element of $\mathcal{S}$ can be put to $\mathcal{B}$ by taking powers (that are bounded by $N_{SU(d)}$). Hence when $<\mathcal{S}>$ is finite introducing $U\in <\mathcal{S}>$ to $\mathcal{B}$ must be equivalent to introducing it to $Z(G)$. This  condition can be phrased in terms of specra of the matrices from $\mathcal{S}$.

\begin{definition}\label{exangle}
Assume $U\notin \mathcal{B}$. The spectrum of $U$ is called exceptional if it consists of $n^{\mathrm{th}}$ roots of $\alpha\in\mathbb{C}$ where $\alpha^d=1$ and $1\leq n\leq N_{SU(d)}$.
\end{definition}
The set of exceptional spectra is a finite set.

To illustrate the above ideas we find $N_{SU(d)}$ and the list of exceptional spectra for $d=2$. Note that for any $U\in SU(2)$ the spectrum is given by $\{e^{i\phi},e^{-i\phi}\}$ and therefore is determined by one angle $\phi$. The angle corresponding to an exceptional spectrum will be called an exceptional angle. Moreover, the centre of $SU(2)$ consists of two matrices $Z(SU(2))=\{I,-I\}$. We start with recalling the Dirichlet approximation theorem \cite{Dirichlet}. 
\begin{fact}[Dirichlet]\label{d1}
For a given real number $b$ and a positive integer $N$ there exist integers $1\leq n\leq N$ and $p$ such, that $nb$ differs from $p$ by at  most $\frac{1}{N+1}$, i.e.
\begin{gather}\label{dirichlet_one}
|nb-p|\leq\frac{1}{N+1}.
\end{gather}
\end{fact}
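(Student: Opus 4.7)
The plan is to prove this by a classical pigeonhole argument on the fractional parts $\{kb\} := kb - \lfloor kb \rfloor$. The slight improvement from the more common bound $1/N$ to the stated $1/(N+1)$ is bought by including an extra boundary point, so first I would enlarge the natural list of $N+1$ fractional parts to an $(N+2)$-element configuration in the closed interval $[0,1]$.

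Concretely, I would consider the $N+2$ real numbers
\begin{gather*}
0=\{0\cdot b\},\ \{b\},\ \{2b\},\ \ldots,\ \{Nb\},\ 1,
\end{gather*}
all of which lie in $[0,1]$, and partition $[0,1]$ into $N+1$ subintervals of length $1/(N+1)$, for example $I_0=[0,1/(N+1)]$ and $I_j=(j/(N+1),(j+1)/(N+1)]$ for $j=1,\ldots,N$. Since $N+2$ points are distributed among $N+1$ buckets, the pigeonhole principle produces two points lying in the same subinterval, hence differing in absolute value by at most $1/(N+1)$.

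Now I would split into two cases according to which pair of points collides. If the two points are $\{k_1b\}$ and $\{k_2b\}$ with $0\le k_1<k_2\le N$, set $n=k_2-k_1\in\{1,\ldots,N\}$ and $p=\lfloor k_2 b\rfloor-\lfloor k_1 b\rfloor$; then $|nb-p|=|\{k_2b\}-\{k_1b\}|\le 1/(N+1)$. If instead the two points are $1$ and $\{kb\}$ for some $0\le k\le N$, then $1-\{kb\}\le 1/(N+1)$; since $N\ge 1$ forces $1/(N+1)<1$ we must have $k\ge 1$, and choosing $n=k$, $p=\lfloor kb\rfloor+1$ yields $|nb-p|=1-\{kb\}\le 1/(N+1)$ with $1\le n\le N$. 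In either case the required $n,p$ exist.

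This is a completely elementary argument, so there is no serious obstacle; the only point that requires care is the need to add the auxiliary endpoint $1$ to the list in order to get the sharper bound $1/(N+1)$ rather than $1/N$, together with the verification in the second case that the point $\{kb\}$ cannot coincide with $\{0\cdot b\}=0$ and hence that the index $n=k$ is genuinely positive.
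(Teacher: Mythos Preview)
Your pigeonhole argument is correct and is exactly the classical proof of Dirichlet's approximation theorem. The paper itself does not supply a proof of this Fact; it simply states it with a reference to Hardy--Wright, so there is nothing to compare beyond noting that your argument is the standard one found in that reference.
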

Let $[0,2\pi)\ni\phi=2b\pi$ be the spectral angle of $U$. By Fact \ref{d1} for a given $N$ there are integers $p$ and $1\leq n \leq N$ such that $|nb-p|\leq\frac{1}{N+1}$. Multiplying this inequality by $\frac{\pi}{2}$ we obtain $|n\frac{\phi}{2}-p\frac{\pi}{2}|\leq\frac{\pi}{2(N+1)}$. Note that  (\ref{condB1}) simplifies to $|\sin\frac{\psi}{2}|<\frac{1}{4}$ or $|\sin\frac{\psi-\pi}{2}|<\frac{1}{4}$. Thus for a given $\phi$ we search for $n$ satisfying  $|n\frac{\phi}{2}-p\frac{\pi}{2}|<\arcsin\frac{1}{4}$. Thus $\frac{\pi}{2(n+1)}<\arcsin\frac{1}{4}$ and 
\begin{gather}
n\leq\ceil*{\frac{\frac{\pi}{2}-\arcsin\frac{1}{4}}{\arcsin\frac{1}{4}}}=6.
\label{nmaxSu3}
\end{gather} 
The above upper bound for $N_{SU(d)}$ is attained for $\phi=\arcsin\frac{1}{4}$ (see figure \ref{condB1}). Hence $N_{SU(2)}=6$.
\begin{figure}[ht!]
\begin{center}\includegraphics[scale=0.44]{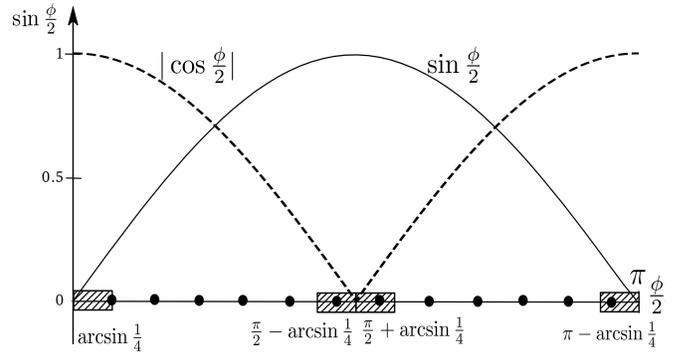}
\caption{Condition (\ref{condB1}) for $SU(2)$. The gray segments are determined by $|\sin\frac{\phi}{2}|<\frac{1}{4}$ and the white segments by $|\sin\frac{\phi-\pi}{2}|<\frac{1}{4}$.}
\end{center}
\end{figure}

Exceptional spectra for $SU(2)$, are determined by roots of $1$ or $-1$ of order $1\leq n\leq 6$, or equivalently, by primitive roots of unity of order $1\leq n\leq 6$ and $8$, $10$, $12$. More precisely, they are given by $\{e^{i\theta},e^{-i\theta}\}$, where 
\begin{gather}
\theta\in \{ k\pi,\frac{k_2\pi}{2},\frac{k_3\pi}{3},\frac{k_4\pi}{4},\frac{k_5\pi}{5},\frac{k_6\pi}{6} \},
\end{gather}
and $\gcd(k_i,i)=1$. The number of exceptional angles can be calculated using the Euler totient function,  and is equal $\sum_{i=1}^6\phi(i)+\sum_{i=4}^6\phi(2i)=24$. For higher dimensional groups, as we discuss in \cite{KS16},  the number $N_{SU(d)}$ grows exponentially with $d$. 
 Our main result is:
\begin{theorem}\label{main}
Assume that $\mathcal{C}(\mathrm{Ad}_{U_1},\ldots, \mathrm{Ad}_{U_n})=\{\lambda I:\lambda \in\mathbb{R}\}$ and at least one matrix $U_{i}$ has a nonexceptional spectrum. Then $\overline{<U_1,\ldots,U_n>}=SU(d)$. 
\end{theorem}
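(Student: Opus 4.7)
The strategy is to reduce Theorem \ref{main} to Lemma \ref{term} by producing an explicit element of $<\mathcal{S}>\cap(\mathcal{B}\setminus Z(SU(d)))$. Since by assumption $\mathcal{C}(\mathrm{Ad}_{\mathcal{S}})=\{\lambda I:\lambda\in\mathbb{R}\}$, Lemma \ref{term} tells us that this single piece of information is enough to force $\overline{<\mathcal{S}>}=SU(d)$.

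The natural candidate is a suitable power of the distinguished matrix $U_i$ whose spectrum is nonexceptional. From the Dirichlet-type bound recalled just before Lemma \ref{term}, every element of $SU(d)$ admits a power $1\leq n\leq N_{SU(d)}$ that lies in $\mathcal{B}$; applying this to $U_i$ produces an integer $n$ with $U_i^n\in\mathcal{B}$. What remains is to verify that $U_i^n$ does not drop into the centre $Z(SU(d))$.

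Suppose, towards a contradiction, that $U_i^n=\alpha I$ for some $\alpha$ with $\alpha^d=1$. Diagonalising $U_i$, every eigenvalue $\lambda$ of $U_i$ satisfies $\lambda^n=\alpha$, so the spectrum of $U_i$ consists of $n$-th roots of $\alpha$ with $1\leq n\leq N_{SU(d)}$, which is exceptional by Definition \ref{exangle}. This contradicts the hypothesis, hence $U_i^n\in\mathcal{B}\setminus Z(SU(d))$. Since $U_i^n\in<\mathcal{S}>$, Lemma \ref{term} yields $\overline{<\mathcal{S}>}=SU(d)$.

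The hard part is not in the plan itself; all the substantive effort sits inside Lemma \ref{term} (whose proof requires the connected-component analysis and the structure theory of semisimple Lie groups) and in the explicit upper bound on $N_{SU(d)}$ (obtained via the modified Dirichlet approximation theorem). The single delicate point is ruling out that the power $U_i^n$ lands in the centre, which is exactly what Definition \ref{exangle} is engineered for: reading it as an algebraic constraint on the eigenvalues of $U_i$ converts the statement ``$U_i^n\in Z(SU(d))$ for some $1\leq n\leq N_{SU(d)}$'' into precisely the condition excluded by the nonexceptional hypothesis.
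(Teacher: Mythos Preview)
Your proposal is correct and is essentially the argument the paper intends: the text surrounding Definition~\ref{exangle} and Lemma~\ref{term} already explains that a nonexceptional spectrum guarantees some power $U_i^n$ with $1\leq n\leq N_{SU(d)}$ lands in $\mathcal{B}\setminus Z(SU(d))$, after which Lemma~\ref{term} finishes the job. Your write-up simply makes this implicit deduction explicit, including the contrapositive reading of Definition~\ref{exangle} that rules out $U_i^n\in Z(SU(d))$.
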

\section{The algorithm for deciding universality of $\mathcal{S}=\{U_1,\ldots,U_n\}\subset SU(d)$}
The case when all matrices $\{U_1,\ldots, U_n\}$ have exceptional spectra requires an algorithm which we next present. Our algorithm allows deciding universality of any given set of $SU(d)$ gates in a finite number of steps. 

\textbf{The algorithm}
\begin{description}
\item[Step 1]Check if $\mathcal{C}(\mathrm{Ad}_\mathcal{S})=\{\lambda I: \lambda\in\mathbb{R}\}$. This can be done by checking the dimension of the kernel of the matrix $M_\mathcal{S}$ (\ref{matrix}) constructed from the entries of matrices $\{\mathrm{Ad}_{U_1},\ldots, \mathrm{Ad}_{U_n}\}$ and thus is a linear algebra problem. If the answer is NO stop as the set $\mathcal{S}$ is not universal. If YES, set $l=1$ and go to step 2.
\item[Step 2] Check if there is a matrix $U\in \mathcal{S}$ for which $U^{n_U}$ belongs to $\mathcal{B}$ but not to $Z(SU(d))$, where $1\leq n_U\leq N_{SU(d)}$. This can be done using formula (\ref{trdist}). If the answer is YES $\mathcal{S}$ is universal. If the answer is NO,  set $l=l+1$.
\item[Step 3] Define the new set $\mathcal{S}$ by adding to $\mathcal{S}$ words of length $l$, i.e products of elements from $\mathcal{S}$ of length $l$. If the new $\mathcal{S}$ is equal to the old one, the group  $\overline{<\mathcal{S}>}$ is finite. Otherwise go to step 2.  
\end{description}

The major advantage of our approach is the fact that we can make decisions in steps 2 and 3 in finite `time'. It is also clear that for randomly chosen matrices $\mathcal{S}=\{U_1,\ldots, U_n\}\subset SU(d)$ our algorithm terminates with probability 1 in Step 2 for $l=1$. This is a direct consequence of the fact that exceptional spectra form a finite set.

Let us next discuss the correctness of our algorithm. Assume that $\mathcal{S}$ passes positively the necessary condition for universality, i.e. the Step 1. If the group $\overline{<\mathcal{S}>}$ is finite the algorithm terminates in Step 3 for some finite $l$. On the other hand, as a direct consequence of Lemma \ref{term}, for an infinite $<\mathcal{S}>$ the algorithm must terminate in Step 2 for some finite $l$. One can also argue that if all finite-length words have exceptional spectra then they cannot form a dense subset. Thus if $<\mathcal{S}>$ is dense then it must contain words of a finite length that have non-exceptional spectra. These words terminate the algorithm in Step 2. Moreover, we have the following: 
\begin{fact}\label{words}
Assume $<\mathcal{S}>$ is dense in $SU(d)$. The length of a word that gives termination of the universality algorithm is at most the length $l$ such that words of length $k\leq l$ form an $\epsilon$-net that covers $SU(d)$, where $\epsilon=\frac{1}{2\sqrt{2}+\delta}$ and $\delta>0$ is arbitrary small. 
\end{fact}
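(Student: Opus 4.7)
\textit{Proof proposal for Fact~\ref{words}.} The plan is to exhibit a word $w$ of length at most $l$ that already lies in $\mathcal{B}\setminus Z(SU(d))$, so that Step~2 of the algorithm terminates immediately with the choice $n_w=1$. The Fact will then follow, because it reduces to the purely geometric claim: any $\epsilon$-net of $SU(d)$ with $\epsilon=1/(2\sqrt{2}+\delta)$, $\delta>0$, must meet $\mathcal{B}\setminus Z(SU(d))$.

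I will prove this geometric claim by locating a single point $V\in SU(d)$ whose closed $\epsilon$-ball is contained in $\mathcal{B}\setminus Z(SU(d))$; the $\epsilon$-net must then contain an element of that ball, which is the word we want. Fix any centre $\alpha I\in Z(SU(d))$ and look for $V$ satisfying $\|V-\alpha I\|=r:=1/(2\sqrt{2})$ and additionally $\|V-\beta I\|>\epsilon$ for every other centre $\beta I$. For such $V$ and any $U\in SU(d)$ with $\|U-V\|\le\epsilon$, two triangle inequalities give
\begin{equation*}
0<r-\epsilon\le\|U-\alpha I\|\le r+\epsilon<\tfrac{1}{\sqrt{2}},
\end{equation*}
so $U\in B_\alpha\subset\mathcal{B}$ and $U\ne\alpha I$, while $\|U-\beta I\|\ge\|V-\beta I\|-\epsilon>0$ rules out the remaining centres. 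The strictness of both outer inequalities in the display is exactly the content of the assumption $\delta>0$, which is why the threshold $\epsilon=1/(2\sqrt{2})$ appears in the statement: $r$ is the midpoint of the allowed interval $(\epsilon,\tfrac{1}{\sqrt{2}}-\epsilon)$, and that interval is nonempty precisely when $\epsilon<1/(2\sqrt{2})$.

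The only step of real content is the existence of $V$, i.e.\ of a point on the Hilbert--Schmidt sphere of radius $r$ around $\alpha I$ that keeps its distance $>\epsilon$ from the remaining $d-1$ centres. For dimensions in which the minimum centre separation $2\sqrt{d}\sin(\pi/d)$ already exceeds $r+\epsilon$ (which covers all $d$ of immediate interest, including every case discussed in this paper), any point of the sphere works and there is nothing to prove. For large $d$, where two centres can lie within $r+\epsilon$ of one another, I would argue by a standard measure/dimension count on the $(d^2-2)$-dimensional sphere around $\alpha I$: the forbidden region is a union of at most $d-1$ spherical caps of radius $\epsilon$, and for any $\delta>0$ their total spherical measure is bounded away from the full measure of the sphere, so they cannot exhaust it. I expect this geometric lemma to be the main (though routine) obstacle; once it is granted, the entire proof is the triangle-inequality arrangement above.
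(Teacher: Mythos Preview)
Your approach is essentially identical to the paper's: pick a point $V$ at Hilbert--Schmidt distance $r=\tfrac{1}{2\sqrt{2}}$ from a centre (the paper simply takes the identity), observe by the triangle inequality that the closed $\epsilon$-ball around $V$ sits inside $B_1\setminus\{I\}$, and conclude that any $\epsilon$-net must drop a word there which terminates Step~2 with $n_w=1$. Your additional care about excluding the \emph{other} centres $\beta I$ is a refinement the paper does not make (its one-line claim ``$C\subset B_1\setminus I$'' ignores them); your observation that for the dimensions actually discussed the centre spacing $2\sqrt{d}\sin(\pi/d)$ already exceeds $r+\epsilon$ disposes of the issue, and the large-$d$ measure argument you sketch is a reasonable way to complete the lemma in full generality.
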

\begin{proof}
Assume that words of the length $k\leq l$ built from elements $\mathcal{S}$ form an $\epsilon$-net for $SU(d)$, where $\epsilon=\frac{1}{2\sqrt{2}+\delta}$ and $\delta>0$ is arbitrary small. Let $U$ be an element of $SU(d)$ whose distance from the identity is exactly $\frac{1}{2\sqrt{2}}$ (see Figure \ref{proof3}). Then by the definition of $\epsilon$-net there must be at least one word $w\in <\mathcal{S}>$ of length $k\leq l$ contained in the ball $C$ of radius $\epsilon=\frac{1}{2\sqrt{2}+\delta}$ centred at $U$. But this ball is contained in $B_1\setminus I$. Hence $w$ gives termination of the universality algorithm in Step 2. The result follows.
\end{proof}

\begin{figure}[ht!]
\begin{center}\includegraphics[
width=0.65\linewidth, height=0.6\linewidth]{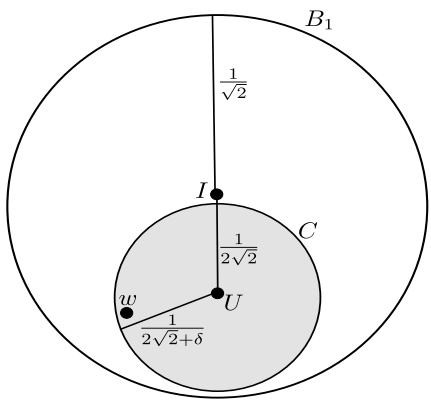}
\caption{\label{proof3} The proof of Fact \ref{words}.}
\end{center}
\end{figure}
The formulation of Fact \ref{words} is related to the results contained in \cite{freedman}.

In order to demonstrate how efficient is our algorithm we determine the maximal $l$ which gives its termination in Step 2 and Step 3 respectively for $SU(2)$. For simplicity we consider $\mathcal{S}$ of the form  $S=\{U(\phi_1,\vec{k}_1),U(\phi_2,\vec{k}_2)\}\subset SU(2)$. To this end it is enough to consider the case when both $\phi_1$ and $\phi_2$ are exceptional angles and the product $U(\phi_{12}, \vec{k}_{12})=U(\phi_1,\vec{k}_1)U(\phi_2,\vec{k}_2)$ has exceptional $\phi_{12}$ as otherwise the algorithm terminates in Step 2 with either $l=1$ or $l=2$. For each such pair (there are finitely many of them) we find the Step and $l$ that gives termination of our algorithm. The detailed discussion of the results and their connection to finite subgroups of $SU(2)$ can be found in \cite{KS16}. In short words, it turns out that our algorithm terminates at Step 2 if and only if $1\leq l\leq 4$ and at Step 3 when $5\leq l\leq 13$. 
\begin{fact}\label{algsu2}
For $\mathcal{S}=\{U(\phi_1,\vec{k}_1),U(\phi_2,\vec{k}_2)\}\subset SU(2)$ the algorithm for checking universality terminates for $l\leq 13$. Moreover,  the set $\mathcal{S}$ is universal if and only if the algorithm terminates for $l\leq 4$. 
\end{fact}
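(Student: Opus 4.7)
The plan is an exhaustive case analysis driven by the finiteness of the exceptional spectra list (24 angles for $SU(2)$, listed just before Theorem \ref{main}) together with the classification of finite subgroups of $SU(2)$ as the binary polyhedral groups (cyclic $C_n$, binary dihedral $BD_n$, binary tetrahedral $2T$ of order $24$, binary octahedral $2O$ of order $48$, binary icosahedral $2I$ of order $120$).

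First I would dispose of the trivial cases. If either $\phi_1$ or $\phi_2$ is non-exceptional, then by Theorem \ref{main} (together with Step 1 passing, which is verified by Lemma \ref{lema2}) the algorithm terminates at $l=1$. So assume both angles lie in the exceptional list. Next, using the composition rule \eqref{def:gamma}, the spectral angle $\phi_{12}$ of $U_1U_2$ is a function of $\phi_1$, $\phi_2$ and the inner product $\vec{k}_1\cdot\vec{k}_2$. For fixed exceptional $\phi_1,\phi_2$, demanding that $\phi_{12}$ also lies in the exceptional list forces $\vec{k}_1\cdot\vec{k}_2$ into a finite algebraic set. If this constraint fails the word $U_1U_2$ has non-exceptional spectrum and Step 2 terminates at $l=2$. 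Iterating the same reasoning for the short words $U_1U_2U_1$, $U_1U_2^{-1}$, $U_1^2U_2$, etc., one peels off all cases in which some word of length $\leq 4$ already escapes the exceptional list.

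What remains is a finite, explicit list of rigid configurations in which every short word has exceptional spectrum. For each such configuration I would identify $<\mathcal{S}>$ directly: either it is one of the binary polyhedral groups above, or it is infinite. In the infinite case, the necessary condition from Step 1 combined with Lemma \ref{term} forces $\overline{<\mathcal{S}>}=SU(2)$; by Fact \ref{words} (or by direct enumeration, since the options are so constrained) the shortest word landing in $\mathcal{B}\setminus Z(SU(2))$ has length at most $4$, giving the bound $l\leq 4$ in Step 2 and establishing the ``only if'' direction of the second statement. The ``if'' direction is immediate from Lemma \ref{term}. In the finite case, Step 3 terminates at the first $l$ for which no new group element is produced as a word of length $l$; this $l$ is bounded by the diameter of the Cayley graph of $<\mathcal{S}>$ with respect to the two given generators.

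The main obstacle, and the content that occupies the detailed discussion in \cite{KS16}, is the bookkeeping: one must enumerate the admissible pairs $(\phi_1,\phi_2)$ together with the finitely many admissible values of $\vec{k}_1\cdot\vec{k}_2$, identify for each configuration which binary polyhedral group is generated (if any), and extract the extremal word length. The value $l=13$ in Step 3 arises as the maximal such diameter, achieved on configurations generating the binary icosahedral group $2I$, while the value $l=4$ in Step 2 is the largest ``escape length'' across all configurations whose closure is all of $SU(2)$. Once this catalogue is in hand the two bounds follow by inspection.
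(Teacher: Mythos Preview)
Your proposal follows essentially the same route as the paper: reduce to the finite set of configurations in which $\phi_1$, $\phi_2$, and the angle of $U_1U_2$ are all exceptional (otherwise Step~2 fires at $l\leq 2$), then run an exhaustive case analysis over those finitely many pairs, invoking the classification of finite subgroups of $SU(2)$, and read off the maximal termination length in each branch. The paper sketches exactly this just above the Fact and defers the full tabulation to \cite{KS16}.

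Two small cautions. First, your attribution of the extremal $l=13$ to the binary icosahedral group is a plausible guess but is not asserted in this paper; the termination length depends on the specific generating pair, not just the abstract group, so this claim must come from the actual enumeration in \cite{KS16} rather than from the order of $2I$. Second, the ``only if'' direction (universal $\Rightarrow$ termination at $l\leq 4$) does not follow from Fact~\ref{words}, which gives an $\epsilon$-net bound rather than the sharp value $4$; as you acknowledge, it is the direct enumeration that yields $4$. Relatedly, one also needs the enumeration to rule out Step~3 terminating at some $l\leq 4$ (i.e.\ a finite group with very short Cayley diameter that nevertheless passed Step~1), which is implicit in the paper's statement that Step~3 terminates only for $5\leq l\leq 13$.
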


%

The main conclusion from Fact \ref{algsu2} is that one can decide universality of any two-element subset of $SU(2)$ by looking at words of the length at most $4$.

\section{Examples for $SU(2)$}

In the remaining part of this paper we demonstrate our approach calculating a few examples. They are chosen particularly to elucidate the importance of the conditions given by Theorem \ref{main}. 


\paragraph{\textbf{Example 1}} Let $\mathcal{S}=\{U(\phi,\vec{k}_1),U(\pi/2,\vec{k}_2)\}$, where $\vec{k}_1\perp\vec{k}_2$ and $\phi_1$ is an irrational multiple $\pi$. For example, when $\vec{k}_1=(0,0,1)$ and $\vec{k}_2=(1,0,0)$, we have
\begin{gather}
U(\phi,\vec{k}_1)=\left(\begin{array}{cc}
e^{-i\phi}& 0\\
0 & e^{i\phi} \\
\end{array}\right),\,U(\pi/2,\vec{k}_2)=\left(\begin{array}{cc}
0 & 1 \\
-1 & 0 \\
\end{array}\right).
\end{gather}
$U(\phi,\vec{k}_1)$ is of an infinite order and since $U(\phi,\vec{k}_1)$ and $U(\pi/2,\vec{k_2})$ do not commute we have that  $<\mathcal{S}>$ is infinite and not abelian. By Fact \ref{fact1}, however, 
\[
\mathcal{C}(\mathrm{Ad}_{U(\phi,\vec{k}_1)},\mathrm{Ad}_{U(\pi/2,\vec{k}_2)})\neq \{\lambda I:\lambda\in\mathbb{C}\},
\]
and hence $\overline{<\mathcal{S}>}\neq SU(2)$. For example $O(\pi,\vec{k}_1)\in SO(3)$ commutes with both $\mathrm{Ad}_{U(\phi_1,\vec{k}_1)}=O(2\phi_1,\vec{k}_1)$ and  $\mathrm{Ad}_{U(\pi/2,\vec{k}_2)}=O(\pi,\vec{k}_2)$. Interestingly, however 
\[
\mathcal{C}(U_1,U_2)=\{\lambda I:\lambda\in\mathbb{C}\}.
\] 
To understand the structure of the group $\overline{<\mathcal{S}>}$ note that $U(\pi/2,\vec{k_2})U(\phi_1,\vec{k}_1)U^{-1}(\pi/2,\vec{k_2})=U^{-1}(\phi_1,\vec{k}_1)$. Hence $U(\pi/2,\vec{k_2})$ is a normaliser of $<U(\phi_1,\vec{k}_1)>$. Thus the group $\overline{<\mathcal{S}>}$ consists of two connected components. The first one is given by one-parameter group $U(t,\vec{k}_1)$, where $t\in\mathbb{R}$ and the other one by elements of the form $U(\pi/2,\vec{k_2})U(t,\vec{k}_1)$. The adjoint representation is able to identify infinite disconnected subgroups whereas the defining representation is not. Moreover, we know exactly how to fix non-universality of the set $\mathcal{S}$. For example, we can add one matrix $U(\gamma,\vec{k}_\gamma)$ such that $\gamma\neq k\pi$ and $\vec{k}_\gamma$ is neither parallel nor orthogonal to $\vec{k}_1$ and $\vec{k}_2$. 
\begin{figure}[ht!]
\begin{center}\includegraphics[
width=\linewidth, height=0.7\linewidth]{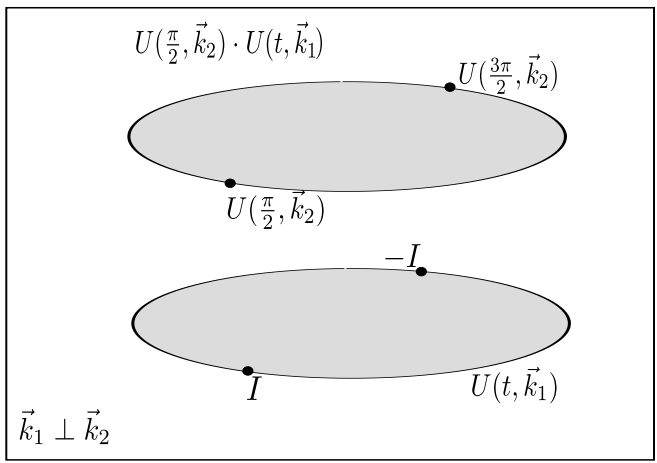}
\caption{Two component group generated by $\mathcal{S}=\{U(\phi,\vec{k}_1),U(\pi/2,\vec{k}_2)\}$.}
\end{center}
\end{figure}
\paragraph{\textbf{Example 2}} Let $H$ be the Hadamard gate and $T_\phi$ a phase gate with an arbitrary phase $\phi$:
\begin{gather}
H=\frac{i}{\sqrt{2}}\left(\begin{array}{cc}
1 & 1 \\
1 & -1 \\
\end{array}\right),\,T_\phi=\left(\begin{array}{cc}
e^{-i\phi}& 0\\
0 & e^{i\phi} \\
\end{array}\right).
\end{gather}
Using our notation $H=U(\pi/2,\vec{k}_H)$, where $\vec{k}_H=\frac{1}{\sqrt{2}}(0,1,1)$ and $T_\phi=U(\phi,\vec{k}_{T_\phi})$, where $\vec{k}_{T_\phi}=(0,0,1)$, $\vec{k}_1\cdot\vec{k}_2=\frac{1}{\sqrt{2}}$.

Our goal is to check for which $\phi$, $\overline{<H,T_\phi>}=SU(2)$. 
\begin{description}
\item[Case 1]
If $\phi=k\pi$ then $T_\phi=\pm I$ and the generated group is the finite cyclic group of the order $4$ when $\phi=0$ or the order $8$ when $\phi=\pi$. 
\item[Case 2] 
When $\phi=\frac{k\pi}{2}$ and $k$ is odd, by Fact \ref{fact1} we have that $\mathcal{C}(\mathrm{Ad}_H,\mathrm{Ad}_{T_\phi})$ is larger than $\{\lambda I:\lambda\in\mathbb{R}\}$ and hence $\overline{<H,T_{\frac{k\pi}{2}}>}\neq SU(2)$. In fact it is the finite dicyclic group of order $16$ whose generators are $HT$ and $T$. Fixing universality in this case requires, for example, adding a matrix that has a non-exceptional spectrum and whose $\vec{k}$ is neither parallel nor orthogonal to $\vec{k}_H$ and $\vec{k}_{T_{\pi/2}}$. 
\item[Case 3]
For $\phi\neq\frac{k\pi}{2}$, again by Fact \ref{fact1}, $\mathcal{C}(\mathrm{Ad}_{H},\mathrm{Ad}_{T(\phi)})=\{\lambda I:\lambda\in\mathbb{R}\}$ and we just need to check if $<H,T_\phi>$ is infinite. We distinguish three possibilities:
 
{\noindent \textbf{1}.} We first assume that $\phi$ is not exceptional. Then by Theorem \ref{main} $\overline{{\langle H,T_\phi\rangle}}=SU(2)$. Our algorithm for deciding universality terminates at step 2 with $l=1$.

\noindent \textbf{2}. We next consider the exceptional angles. For 
\[
\phi\in \{\frac{k_3\pi}{3},\,\frac{k_5\pi}{5},\,\frac{k_6\pi}{6}\},\,\,\mathrm{gcd}(k_i,i)=1,
\]
we look at the product $U(\gamma,\vec{k}_{HT})=HT_\phi=U(\pi/2,\vec{k}_H)U(\phi,\vec{k}_T)$. Using formula (\ref{def:gamma}) we calculate $\cos\gamma$, compare it with $\cos\psi$ for all exceptional angles $\psi$ and find out they never agree. Hence $\gamma$ is not exceptional. Thus by Theorem \ref{main} we get $\overline{<HT_\phi>}=SU(2)$. Our algorithm for deciding universality terminates in Step 2 with $l=2$.

\noindent\textbf{3}. We are left with $\phi=\frac{k_4\pi}{4}$ where $\mathrm{gcd}(k_4,4)=1$. There are exactly four such angles. Calculations of $U(\gamma,\vec{k}_{HT_\phi})=HT_\phi$ shows that $\gamma$ is exceptional, i.e $\gamma=\frac{k_3\pi}{3}$, where $\mathrm{gcd}(k_3,3)=1$. Moreover, taking further products results in a finite subgroup consisting of $48$ elements (all have exceptional spectra) known as the binary octahedral group. Our algorithm for deciding universality terminates in Step 3 with $l=8$. Fixing non-universality can be accomplished by, for example, adding one gate $U(\psi,\vec{k}_\psi)$ with a non-exceptional $\psi$ and an arbitrary $\vec{k}_\psi$.
\end{description}

As we can see from the above example our algorithm requires at most words of length  $l=8$ to terminate for any $H$ and $T_\phi$.

\section{Acknowledgment} 
We would like to thank the anonymous referees for suggestions that led to improvements of the paper. This work was supported by National Science Centre, Poland under the grant SONATA BIS: 2015/18/E/ST1/00200.

\bibliographystyle{apsrev4-1}

\end{document}